\documentclass[letterpaper, 10 pt, conference]{ieeeconf} 
\pdfminorversion=4
\IEEEoverridecommandlockouts                              
\overrideIEEEmargins

\usepackage{hyperref}
\usepackage{cite}
\usepackage{amsmath,amssymb,amsfonts,amsbsy}
\usepackage{algorithm}
\usepackage{algorithmic}
\usepackage{graphicx}
\usepackage{comment}
\usepackage{theorem,url}
\usepackage{tabularx} 
\usepackage{bm}
\usepackage{dsfont}
\usepackage{float}
\usepackage{xcolor}

{\theorembodyfont{\slshape}\newtheorem{theorem}{Theorem}[section]}
{\theorembodyfont{\slshape}}
{\theorembodyfont{\slshape}\newtheorem{lemma}[theorem]{Lemma}}
{\theorembodyfont{\slshape}}
{\theorembodyfont{\slshape}}
{\theorembodyfont{\upshape}\newtheorem{definition}[theorem]{Definition}}
{\theorembodyfont{\upshape}}
{\theorembodyfont{\upshape}\newtheorem{remark}[theorem]{Remark}}
{\theorembodyfont{\upshape}\newtheorem{assumption}[theorem]{Assumption}}
{\theorembodyfont{\upshape}\newtheorem{example}[theorem]{Example}}

\newcommand{\setR}{\mathbb{R}}

\newcommand{\setX}{\mathbb{X}}

\newcommand{\nonnegR}{\mathbb{R}_{\geq0}}
\newcommand{\posR}{\mathbb{R}_{>0}}

\newcommand{\setRnx}{\mathbb{R}^{n_x}}
\newcommand{\setRny}{\mathbb{R}^{n_y}}

\newcommand{\graphA}{\mathcal{A}}
\newcommand{\graphAbar}{\bar{\mathcal{A}}}
\newcommand{\graphAtree}{\mathcal{A}_{\text{tree}}}

\newcommand{\graphC}{\mathcal{C}}

\newcommand{\graphE}{\mathcal{E}}

\newcommand{\graphG}{\mathcal{G}}

\newcommand{\graphP}{\mathcal{P}}

\newcommand{\graphV}{\mathcal{V}}
\newcommand{\graphW}{\mathcal{W}}

\newcommand{\graphX}{\mathcal{X}}

\newcommand{\graphAC}{\graphG([\graphA^\top,\graphC^\top])}
\newcommand{\graphACbar}{\graphG([\graphAbar^\top,\graphC^\top])}

\newcommand{\Aadj}{A_{\mathrm{adj}}}
\newcommand{\graphVext}{\mathcal{V}_{\mathrm{e}}}
\newcommand{\graphVint}{\mathcal{V}_{\mathrm{i}}}

\newcommand{\Ainc}{A_{\mathrm{inc}}}

\newcommand{\strucset}{\{0, \ast, ?\}}
\newcommand{\MatrinStruc}[1]{#1\!\in\! \mathcal{P}(\mathcal{#1})}

\title{\LARGE \bf Scalable Sensor Placement for Cyclic Networks with Observability Guarantees: Application to Water Distribution Networks}

\author{J.J.H. van Gemert, V. Breschi, D.R. Yntema, K.J. Keesman, M. Lazar
\thanks{This work was performed in the corporation framework of Wetsus, European Centre of Excellence for Sustainable Water Technology (www.wetsus.nl). Wetsus is co-funded by the Dutch Government (Ministry of Economic Affairs and Climate Policy, Ministry of Education, Culture and Science and Ministry of Infrastructure and Water Management) and the Province of Fryslan. The authors like to thank the participants of the research theme `Smart Water Grids' for the fruitful discussions and their financial support.}
\thanks{J.J.H. van Gemert, V. Breschi and M. Lazar are with the Department of Electrical Engineering, Eindhoven University of Technology, The Netherlands. E-mail of corresponding author: {\tt\small J.J.H.v.Gemert@tue.nl}}
\thanks{D. Yntema and K.J. Keesman are with Wetsus, Centre of Excellence for Sustainable Water Technology, 8900 Leeuwarden, The Netherlands. }
\thanks{
K.J. Keesman is also with Mathematical and Statistical Methods – Biometris, Wageningen University, Wageningen, The Netherlands.} 
}

\begin{document}
\maketitle
\thispagestyle{empty}
\pagestyle{empty}
\begin{abstract}
Optimal sensor placement is essential for state estimation and effective network monitoring. As known in the literature, this problem becomes particularly challenging in large-scale undirected or bidirected cyclic networks with parametric uncertainties, such as water distribution networks (WDNs), where pipe resistance and demand patterns are often unknown. 
Motivated by the challenges of cycles, parametric uncertainties, and scalability, this paper proposes a sensor placement algorithm that guarantees structural observability for cyclic and acyclic networks with parametric uncertainties. By leveraging a graph-based strategy, the proposed method efficiently addresses the computational complexities of large-scale networks. To demonstrate the algorithm's effectiveness, we apply it to several EPANET benchmark WDNs. Most notably, the developed algorithm solves the sensor placement problem with guaranteed structured observability for the L-town WDN with 1694 nodes and 124 cycles in under 0.1 seconds.
\end{abstract}

\section{Introduction}\label{Sec: Introduction}
Sensor placement is fundamental for state estimation and network monitoring, which is crucial for many real-life large-scale infrastructures, such as water networks. 
While extensively studied, the problem becomes more challenging in large-scale bidirected or undirected cyclic networks with parametric uncertainties, which complicate observability analysis and sensor placement strategies \cite{mohan2023uncertainties}. 
This challenge is particularly relevant in physical systems like water distribution networks (WDNs), which are large-scale, cyclic, and subject to uncertainties in parameters such as pipe lengths, diameters, resistance, and demand patterns. Accurate network monitoring is essential for the fast and precise detection, prediction, and localization of leaks and bursts, as undetected failures can lead to significant water losses, lower quality of service, and costly repairs. This emphasizes the need for scalable sensor placement methods to ensure reliable monitoring and control of WDNs \cite{annaswamy2024control,creaco2019real,perez2014leak}.

Existing sensor placement methods in WDNs rely on simulation-based approaches, where sensor configurations are designed based on simulated leakage scenarios to find the best sensor configuration \cite{santos2022pressure,ROMEROBEN202254,irofti2023learning,gautam2022efficient}. These methods specifically target leakage detection and have strong practical relevance, as they develop sensor placement algorithms based on real-world scenarios. More recent approaches incorporate graph-based and centrality-based heuristics \cite{cheng2023optimal,diao2023sensor,li2023optimal}, which offer improved scalability. Nonetheless, these methods do not come with guarantees of observability, and expected performance is limited to the finite scenarios explored in simulation.
Thus, incorporating control theory, particularly observability theory, could offer a more structured approach for sensor placement, enabling state reconstruction in finite time and thereby improving general network monitoring.

In the control systems literature, various strategies for sensor placement based on observability theory have been proposed. 
In \cite{geelen2021optimal,bopardikar2021randomized}, two approaches leveraging the observability Gramian are introduced, aiming to enhance detectability by maximizing the minimum eigenvalue of the Gramian. The approach in \cite{montanari2020observability} extends this by analyzing the ratio between the minimum and maximum eigenvalues of the Gramian. Nonetheless, these methods rely on the assumption that model parameters, such as pipe resistance and demand patterns, are precisely known.
This limitation is addressed in \cite{zhang2024functional}, where a structural functional observability (SFO) framework is introduced, which accounts for a set of parametric uncertainties and provides necessary and sufficient conditions for observability. This method employs a greedy heuristic to select a sensor placement configuration from a predefined set. Similarly, in \cite{mousavi2020Structural}, structural observability theory is applied to sensor placement in a directed cyclic graph, specifically a motorway Ring Road, where zero forcing sets are used to determine the minimum number of sensors required.

In addition to system theory-based approaches, methods for sensor placement focused on specific network structures have been explored as well.  
In \cite{Suresh2013detection}, a weighted set cover (WSC) algorithm is proposed to place mobile sensors for detecting and localizing events (such as leaks) in acyclic flow networks. In \cite{doostmohammadian2017sensor}, the sensor placement problem is formulated with a focus on cyclic graphs, guaranteeing structural observability. By formulating the sensor placement problem as a Linear Sum Assignment Problem (LSAP). Nonetheless, the method assumes a predefined set of cyclic network structures. 

Motivated by the challenges and limitations of existing sensor placement methods identified above, this paper leverages structural observability principles derived from structural controllability theory \cite{jia2020unifying} to develop a new, scalable, graph-based algorithm for sensor placement in cyclic networks. By relying on structural properties, the method avoids the need for precise parameter knowledge and is applicable to a class of diffusively coupled systems whose dynamics can be approximated by linear ordinary differential equations, and where the resulting structure of the state matrix is symmetric.

To enhance scalability, we develop a method for eliminating cycles and generating spanning trees to determine sensor placement, which significantly reduces computational complexity. 
Here, scalability refers to the ability of the algorithm’s runtime to grow with network size, ideally increasing linearly, i.e., linear computational complexity.
This is essential for real-world WDNs, which often contain thousands of nodes.

In addition, we derive novel sufficient conditions for sensor placement in tree networks that guarantee structural observability. The effectiveness of the proposed algorithm is demonstrated on five EPANET benchmarks of varying size and complexity.

\paragraph*{Novelty with respect to \cite{vanGemert2024exploiting}} 
In \cite{vanGemert2024exploiting}, we proposed a structural observability-based sensor placement method for WDNs, focusing on uncertain and nonlinear WDN models, without considering scalability. Indeed, the sensor placement algorithm in \cite{vanGemert2024exploiting} scales exponentially with the network, requiring 12 hours to find a sensor placement configuration even for small networks (like the Hanoi one).
Therefore, in this paper, we focus on scalability by introducing a spanning tree-based approach that significantly reduces computational complexity while ensuring structural observability in large-scale cyclic networks.

The remainder of this paper is structured as follows. Section \ref{Sec: Preliminaries and problem statement} provides the necessary preliminaries, introducing the considered class of systems, key notions from observability theory, structural system theory, graph theory, and the problem statement. Section \ref{Sec: Main results} presents the main theoretical results and corresponding sensor placement algorithms. Section \ref{Sec: Implementation on EPANET benchmarks} showcases the effectiveness of the proposed method through simulation results, including applications to large-scale water distribution networks, followed by the conclusions in Section \ref{Sec: Conclusions}.

\subsection{Basic notation}
Let $\setR$, $\nonnegR$, and $\posR$ denote the field of real numbers, the set of non-negative reals, and the set of positive reals, respectively. A variable $a\in\{0,1\}$ is called a binary variable. 
For a vector $x\in \setR^{n_x}$, $x_i$ denotes the $i$-th element of $x$. 
For a matrix $A\in \setR^{n_x\times n_x}$, $A^\top$ denotes its transpose and $A(i,:)$ denotes the $i$-th row of $A$. 

\section{Preliminaries and problem statement}\label{Sec: Preliminaries and problem statement}
In this section, we introduce the necessary preliminaries, as well as the problem statement. 

Let us consider the continuous-time, autonomous, time-invariant, linear system 
\begin{equation}\label{eq: LTI}
\begin{aligned}
    \Dot{x}(t)=&Ax(t), \quad t\in \nonnegR,\\
     y(t) =& Cx(t),
    \end{aligned}
\end{equation}
where $x(t)\in \setX\subseteq \setRnx$ and $y(t)\in\setRny$ are the state and the output of the system, respectively, $A\in\setR^{n_x\times n_x}$ is the state matrix and $C\in\setR^{n_y\times n_x}$ is the output matrix. Assuming these matrices are uncertain, the goal of this paper is to determine a sensor placement combination such that \eqref{eq: LTI} is observable according to the following definition.
\begin{definition}[\hspace{-0.005cm}\cite{kalman1960general}]\label{Def: Observability}
    System \eqref{eq: LTI} is said to be observable if, for any unknown initial state $x(0)\in\setX$, there exists a finite time $t> 0$, such that the output $y(\cdot):[0,t]\rightarrow\setRny$ suffices to uniquely determine $x(0)$.
\end{definition}
To cope with the fact that $A$ and $C$ in \eqref{eq: LTI} are uncertain, we use \emph{structural observability} theory to analyze the observability of \eqref{eq: LTI}. We thus analyze observability by using pattern matrices, which represent the system's structure without relying on the specific numerical values of the elements of $A$ and $C$. To formalize this, we first introduce the concept of a pattern matrix $\graphX$ and its corresponding pattern class $\graphP(\graphX)$.
\begin{definition}\label{Def: PatternMatrix}
A pattern matrix $\graphX \in \strucset^{p\times q}$ is a matrix whose entries belong to the set $\{0, *, ?\}$. The pattern class $\graphP(\graphX)$ is the set of all matrices that 
have the same structure as the pattern matrix $\graphX$, formally given by:
\begin{align*}
    \graphP(\graphX):=\{X\in\setR^{p\times q}\;|\; &\graphX(i,j)=0\text{ if } X(i,j)=0,\\  &\graphX(i,j)=\ast  \text{ if } X(i,j)\neq 0,\\ &\graphX(i,j)=?\text{ if } X(i,j) \text{ arbitrary} \;\}.
\end{align*}
\end{definition}

This means that $X\in\graphP(\graphX)$ if $X$ follows the structural constraints imposed by $\graphX$ as in Definition \ref{Def: PatternMatrix}.

Using this framework, we define the pattern matrices $\graphA\!\in\!\{0,\ast,?\}^{n_x\times n_x}$ and $\graphC\!\in\!\{0,\ast\}^{n_y\times n_x}$ such that $\MatrinStruc{A}$ and $\MatrinStruc{C}$ represent the family of systems \eqref{eq: LTI}. We refer to this family of systems as a \emph{structured system}, denoted by $(\graphA,\graphC)$. 
The notation of a structured system allows for an extension of the concept from Definition \ref{Def: Observability} to structural observability formalized as follows.
\begin{definition}\label{Def: Structural Obserability}
    A structured system $(\graphA,\graphC)$ is called strongly structurally observable if the pair $(A,C)$ is observable for all $\MatrinStruc{A}$ and $\MatrinStruc{C}$.
\end{definition}
Thanks to the duality between controllability and observability \cite{lee1967foundations}, necessary and sufficient conditions for strong structural observability follow from the conditions devised for strong structural controllability in \cite{jia2020unifying}, by applying the conditions to $(\graphA^\top, \graphC^\top)$.

\subsection{Basic graph notions}
To evaluate strong structural observability and optimize sensor placement in networks, we employ graph-based methods. To facilitate this, we first introduce relevant graph notions.

A graph is defined as $\graphG=\{\graphV,\graphE,\graphW\}$, where $\graphV=\{v_i\}_{i=1}^{n}$ is the set of nodes, $\graphE=\{e_{i}\}_{i=1}^{m}$ is the set of edges that connect nodes and $\graphW=\{w_{i}\}_{i=1}^{m}$ is the set of weights assigned to edges. A graph is \emph{bidirected} if every node with an outgoing edge to another node also has an incoming edge from that same node, though the edge weights may differ. A graph is \emph{undirected} if it is bidirected with equal edge weights. 

For both bidirected and undirected graphs, the adjacency matrix $\Aadj$ represents the edges between pairs of nodes, with its entries representing the weights of the edges. In a bidirected graph, if $\Aadj(i,j) \neq 0$, then $\Aadj(j,i) \neq 0$ as well. However, $\Aadj$ is not symmetric, i.e., $\Aadj(j,i) \neq \Aadj(i,j)$, due to potential asymmetries in the edge weights. For undirected graphs, the adjacency matrix is symmetric, i.e., $\Aadj(j,i) = \Aadj(i,j)$ for all pairs of connected nodes.

The incidence matrix $\Ainc$, represents the relationship between nodes and edges for bidirected graphs, where $\Ainc(i,j) = w_i$ for some $w_i\in\posR$ if $v_i$ is the tail of edge $e_j$ and $\Ainc(i,j) = -w_i$ if $v_i$ is the head of edge $e_j$. 

As an example, consider the graph in Fig. \ref{Fig:Example Networks}(a), the corresponding adjacency and incidence matrices are then given by
\begin{equation}\label{eq:IncAdjWdn}
   \Aadj\!=\!\begin{bmatrix}
         0\!\!&\!\! 1\!\!&\!\! 1 \!\\
         1\!\!&\!\! 0\!\!&\!\! 1\!\\
         1\!\!&\!\! 1\!\!&\!\! 0\!\\
    \end{bmatrix}\!\!, \; \Ainc\!=\!\begin{bmatrix}
         1\!\!&\!\!-1\!\!&\!\! 0\!\!&\!\! 0 \!\!&\!\!-1\!\!&\!\! 1\\
        -1\!\!&\!\! 1\!\!&\!\! 1\!\!&\!\!-1 \!\!&\!\! 
         0\!\!&\!\! 0\\
         0\!\!&\!\! 0\!\!&\!\!-1\!\!&\!\! 1 \!\!&\!\! 1\!\!&\!\!-1\\

    \end{bmatrix}\!\!.
\end{equation}
By relying on the previous notions, the system in \eqref{eq: LTI} can be associated with a graph $\graphG(A)=\{\graphV,\graphE,\graphW\}$, where the state matrix serves as the adjacency matrix.

\subsection{Instrumental graph definitions}
In this section, we introduce some key graph-theoretic definitions that are used throughout this paper, beginning with the concept of a \emph{cycle}, an \emph{inner cycle} and a \emph{cyclic graph}.
\begin{definition}\label{Def: Cycle_InnerCycle_Cyclic}
A \emph{cycle} is a path consisting of more than two nodes, that begins and ends at the same node, with no repeated edges.
An \emph{inner cycle} is a cycle not containing any smaller cycles within it.
A \emph{cyclic graph} is a graph that contains at least one cycle.
\end{definition}
In contrast, a \emph{tree} is a connected graph that is acyclic and thus contains no cycles \cite{diestel2025graph}. A closely related concept is that of a \emph{spanning tree} which is formalized as follows.
\begin{definition}
Given a cyclic graph $\graphG(A) = \{\graphV, \graphE,\graphW\}$, a spanning tree is a subgraph of $\graphG(A)$ denoted by $\graphG(A_{\text{tree}})$, where 
\begin{equation}
\graphG(A_{\text{tree}}) = (\graphV, \graphE_{\text{tree}}, \graphW_{\text{tree}})
\end{equation}
with $\graphE_{\text{tree}} \subset \graphE$ and $\graphW_{\text{tree}} \subset \graphW$ such that $\graphG(A_{\text{tree}})$ is a tree.
\end{definition}
For an example of a spanning tree, we refer to the cyclic bidirected network in Fig. \ref{Fig:Example Networks}(a), with two possible spanning trees shown in Fig. \ref{Fig:Example Networks}(b).

To further classify nodes within a graph, we introduce the concepts of \emph{intersection nodes} and \emph{extreme nodes}.
\begin{definition}\label{Def: Intersection_Extreme_Node}
    A node $v \in \graphV$ is an \textit{intersection node} if it has at least three neighbors, with the set of intersection nodes with cardinality $n_i$ denoted by $\graphVint \subset \graphV$. A node $v \in \graphV$ is an \textit{extreme node} if it has exactly one neighbor, with the set of extreme nodes with the cardinality $n_e$ denoted by $\graphVext \subset \graphV$. 
\end{definition}
Finally, we introduce the definitions of a \emph{path}, \emph{branch} and \emph{self-loop}. 
\begin{definition}\label{Def: Branch}
    A \emph{path} is a sequence of nodes $v_i \in \graphV$ connected by edges $e_i \in \graphE$, namely $(v_1, v_2, \dots, v_k)$ for some $k\leq n$.
    A \emph{branch} is a path that starts at an \emph{extreme node} $v_1$ and continues, $(v_1, v_2, \dots, v_k)$ until it reaches an \emph{intersection node} $v_k$, i.e. $(v_1, v_2,\ldots,v_k)$, where $v_i$ for $i=2,\hdots,k-1$, are neither extreme nor intersection nodes.
    A \emph{self-loop} is an edge that connects a node to itself, i.e., for a node $v \in \graphV$, there exists an edge $e = (v, v) \in \graphE$.\end{definition}

\subsection{Colorability rules for graphs}
To verify the strong structural observability of a structured system, we use a graph-based approach that applies a color-change rule for observability.
Before introducing the color-change rule, we define an unweighted graph $\graphG(\graphX) = \{\graphV, \graphE\}$ associated with a pattern matrix $\graphX$, where $\graphX$ serves as the adjacency matrix. Specifically, an edge exists from node $v_i$ to node $v_j$ if $\graphX(i,j) = \ast$ or $\graphX(i,j) = ?$.
To distinguish between the entries \( \ast \) and \( ? \) we first introduce two subsets $\graphE_\ast$ and $\graphE_?$, of the edge set \( \graphE \), with   
\begin{align*}
    (i, j) \in \graphE_\ast \;\; &\text{iff} \;\; \graphX(i,j) = \ast, \quad
    (i, j) \in \graphE_? \;\; \text{iff} \;\; \graphX(i,j) = ?.
\end{align*}
The edges in $\graphE_\ast$ and $\graphE_?$ are represented by solid and dashed arrows, respectively. As an example, consider the pattern matrix
\begin{equation}\label{eq: PatternMatrix}
     \graphX\!=\!\begin{bmatrix}
         0\!\!&\!\! \ast\!\!&\!\! \ast \!\\
         \ast\!\!&\!\! 0\!\!&\!\! ?\!\\
         \ast\!\!&\!\! ?\!\!&\!\! 0\!\\
    \end{bmatrix}.  
\end{equation} 
The associated graph $\graphG(\graphX)$ is shown in Fig. \ref{Fig:Example Networks}(c).
\begin{figure}
    \centering
    \includegraphics[width=0.8\linewidth]{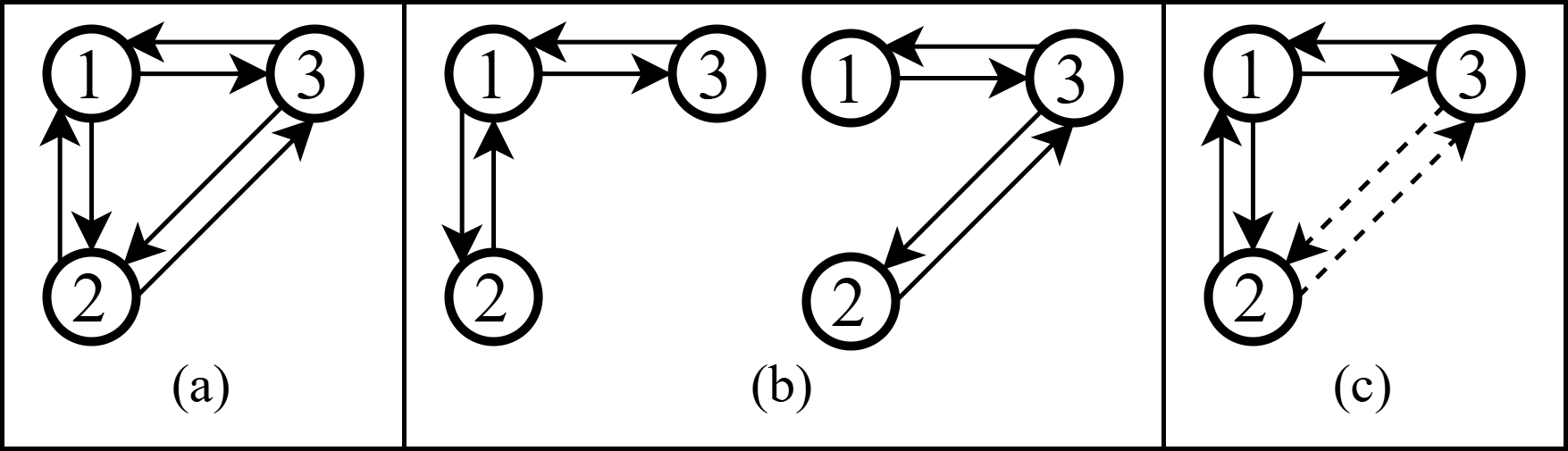}
    \caption{Subfigure (a) illustrates an example of a bidirected cyclic network with the structure of $\Aadj$ in \eqref{eq:IncAdjWdn}. Subfigure (b) presents two possible spanning trees derived from (a), and subfigure (c) displays a graph with arbitrary edges.}
    \label{Fig:Example Networks}
\end{figure}

Next, we introduce the \emph{\textbf{color-change rule}} based on the graph $\graphAC$.
Every node in the graph $\graphAC=\{\graphV,\graphE_\ast\}$ is initially colored white. If a node \( i \) has exactly one out-neighbor \( j \) such that \( (i,j) \in \graphE_\ast \) and \( j \) is white (self-loops included), then we change the color of node \( j \) to black, see Fig. \ref{Fig:Lemma_Example}. The graph \( \graphAC \) is considered \emph{colorable} if all nodes \( \graphV(i) \) for \( i = 1, \dots, n_x \), representing the nodes of \(\graphG(\graphA^\top) \), are colored.

With the color-change rule, we can define the condition that leads to \emph{strong structural observability} in Definition \ref{Def: Structural Obserability}.
\begin{definition}[\hspace{-0.005cm}\cite{jia2020unifying}]\label{Def: Observability_by_Colorability}
     A structured system $(\graphA,\graphC)$ is strongly structurally observable if the graphs $\graphAC$ and $\graphACbar$, where  \begin{equation}
    \graphAbar(i,i):= \begin{cases}\ast & \text { if } \graphA(i,i)=0  \\ ? & \text { otherwise,}\end{cases}
\end{equation} are colorable. 
\end{definition}
Note that constructing $\graphAbar$ from $\graphA$ affects only the self-loops in $\graphACbar$ by modifying the diagonal elements of $\graphAbar$, i.e., no $0$ elements on the diagonal.

\subsection{Problem statement}\label{SubSec: Assumptions and goal}
Assume we have system \eqref{eq: LTI} with $A\in\graphP(\graphA)$ and $C\in\graphP(\graphC)$ where $\graphA\in\{0,\ast,?\}^{n_x\times n_x}$ and $\graphC\in\{0,\ast\}^{n_y\times n_x}$, satisfy the following assumptions. 
\begin{assumption}\label{Ass: GraphA}
The graph structure $\graphG(\graphA^\top)$ satisfies:
\begin{enumerate}
    \item[i)] The graph $\graphG(\graphA^\top)$ is bidirected or undirected, i.e, the matrix \(\graphA\) is symmetric such that \( \graphA = \graphA^\top \);
    \item[ii)] The graph $\graphG(\graphA^\top)$ is \emph{fully connected}, with all pairs of nodes having a path between them through edges in $\graphE_\ast$;
    \item[iii)] The graph $\graphG(\graphA^\top)$ contains \emph{at least one extreme node}. 
\end{enumerate}    
\end{assumption}
\begin{assumption}\label{Ass: C}
    The output matrix $C\in\setR^{n_y\times n_x}$ satisfies:
    \begin{itemize}
        \item[i)] $C(i,j)\in\{1,0\}$ for all $i=1,\ldots,n_y$ and $j=1,\ldots,n_x$;
        \item[ii)] $\sum_{j=1}^{n_x}C(i,j)=1$ and $\sum_{i=1}^{n_y}C(i,j)\leq1$;
        \item [iii)] rank$(C)=n_y$.
    \end{itemize}
\end{assumption}
These assumptions on the output matrix ensure each row contains exactly one entry equal to $1$, and the rows are linearly independent. Note that this structure reflects a realistic WDN scenario, where sensors, such as pressure and flow sensors, are limited to measuring a single state, e.g., a flow sensor can only measure the flow rate in one specific pipe. 

Now, suppose we attempt to evaluate all possible sensor combinations to retrieve the optimal sensor placement configuration. As the network size increases, the computational complexity of this problem grows exponentially. Specifically, if Assumption \ref{Ass: C} holds, then, for $A \in \setR^{n_x \times n_x}$, there are $2^{n_x} - 1$ possible sensor configurations (and thus possible configurations of $C$). Consequently, the size of this problem explodes, making it nontrivial to develop an algorithm that efficiently finds a sensor placement combination for large networks.

Motivated by this, under Assumption \ref{Ass: GraphA} and Assumption \ref{Ass: C} our objective is to develop a scalable sensor placement strategy that ensures observability of system \eqref{eq: LTI}, specifically targeting large-scale bidirected or undirected cyclic networks with parametric uncertainties.

\section{Main results}\label{Sec: Main results}
We first present an intermediate result on sensor placement in bidirected or undirected tree graphs, as their simpler structure enables more straightforward sensor placement. We then focus on our main objective, a scalable sensor placement algorithm for cyclic bidirected or undirected graphs, which uses spanning trees to reduce computational complexity.

\subsection{Sensor placement in tree graphs using colorability}\label{SubSec: Sensor placement in tree graphs using colorability}
Let a pattern matrix $\graphA$ for \eqref{eq: LTI}, such that $A\in \graphP(\graphA)$, satisfy the following assumption.
\begin{assumption}\label{Ass: ATree}
    The graph $\graphG(\graphA^\top)$ is a tree.
\end{assumption}
Note that if $\graphG(\graphA^\top)$ is a tree, then $\graphG(\graphAbar^\top)$ is also a tree, as self-loops do not create cycles.
Under this assumption, we derive the following result on the observability of \eqref{eq: LTI}.
\begin{lemma}\label{Lemma: Tree}
Let Assumption \ref{Ass: C} and Assumption \ref{Ass: ATree} hold, and let $C\in\setR^{n_y\times n_x}$ satisfy    
\begin{equation}\label{eq: C_Lemma}
            C(i,j)=\begin{cases}1 &\text{for } j\in\graphVext\\0 &\text{otherwise},\\ \end{cases}
    \end{equation} with $n_y = n_e - 1$ and $n_e,\graphV_e$ introduced in Definition \ref{Def: Intersection_Extreme_Node}. Then, the linear system \eqref{eq: LTI} is observable. 
\end{lemma}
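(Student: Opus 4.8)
The plan is to prove observability by establishing the stronger property of strong structural observability of $(\graphA,\graphC)$, which by Definition~\ref{Def: Observability_by_Colorability} reduces to showing that the two graphs $\graphAC$ and $\graphACbar$ are both colorable; since the specific matrices satisfy $A\in\graphP(\graphA)$ and $C\in\graphP(\graphC)$, strong structural observability immediately yields observability of~\eqref{eq: LTI}. The first step is to identify where the coloring starts. By Assumption~\ref{Ass: C} every row of $C$ carries a single unit entry, so each of the $n_y=n_e-1$ input columns of $\graphC^\top$ contributes a source node whose unique out-neighbor (through an edge in $\graphE_\ast$) is one extreme node of $\graphVext$. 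Hence the color-change rule first colors the $n_e-1$ sensored extreme nodes, leaving precisely one extreme node, say $v^\ast$, uncolored.

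The heart of the argument is the purely combinatorial claim that, in a bidirected tree, coloring all but one leaf forces the entire vertex set under the color-change rule. I would prove this by rooting $\graphG(\graphA^\top)$ at the unsensored extreme node $v^\ast$ and propagating the coloring from the leaves inward: each sensored leaf, once black, forces its unique neighbor, and, processing the rooted tree bottom-up, every internal node is forced by a child as soon as that child's subtree is fully colored, whereupon it becomes black and can in turn force its own parent (its then-unique white out-neighbor); the root $v^\ast$ is colored last by its single child. To make this rigorous I would argue by induction on $n_x$ using the monotonicity of the rule (enlarging the initially black set can only enlarge the final black set). Pick a sensored leaf $v$ with neighbor $w$, let $v$ force $w$, and delete $v$: the residual graph is a smaller tree whose colored set contains all-but-one of its leaves (plus possibly the extra colored node $w$, when $w$ was not already a leaf), so the induction hypothesis together with monotonicity completes the coloring; since $v$ is then black with no white out-neighbor, its presence does not alter any forcing step on the reduced tree. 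The base case is the single edge, where the one sensored leaf directly forces $v^\ast$.

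The main subtlety, and the step I expect to require the most care, is verifying $\graphAC$ and $\graphACbar$ simultaneously, since they differ exactly in their self-loops: by construction $\graphAbar$ places a self-loop at every node, whereas $\graphA$ need not. The observation that lets one argument cover both cases is that the propagation above only ever uses a node to force its unique white out-neighbor at a moment when that node is already black, and a black node's self-loop points to a black vertex (itself) and therefore never adds a white out-neighbor. Consequently the self-loops introduced in $\graphAbar$ neither enable nor block any forcing step, so the identical leaf-to-root coloring colors every state node of both $\graphAC$ and $\graphACbar$. Both graphs are thus colorable, $(\graphA,\graphC)$ is strongly structurally observable, and observability of~\eqref{eq: LTI} follows.
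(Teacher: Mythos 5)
Your proposal is correct and follows essentially the same route as the paper's proof: both establish colorability of $\graphAC$ and $\graphACbar$ by coloring the $n_e-1$ sensored extreme nodes and propagating inward toward the single unsensored extreme node, and both dispose of the self-loops in $\graphACbar$ by noting that a forcing node is already black when it forces, so its self-loop never contributes a white out-neighbor (the paper's Remark~\ref{Remark: self-loops}). Your leaf-deletion induction with the monotonicity argument is a more rigorous formalization of the paper's informal four-step coloring narrative, but it is the same underlying argument.
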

\begin{proof}
Let $\graphC$ be a pattern matrix such that $C \in \graphP(\graphC)$. If Assumption \ref{Ass: C} holds and $C$ satisfies \eqref{eq: C_Lemma}, then, by construction, a sensor node is connected to every extreme node of $\graphA^\top$ and $\graphAbar^\top$ except for one, in $\graphAC$ and $\graphACbar$, respectively.
By following the coloring rule and considering only edges in $\graphE_\ast$ for the coloring process, we can conclude the following on $\graphAC$ and $\graphACbar$:
\begin{enumerate}
    \item All \emph{extreme nodes} except one are colored by the sensor nodes connected to them, due to the construction of $C$.
    \item Given the tree structure of $\graphA^\top$ and $\graphAbar^\top$, each branch with a colored extreme node is then colored.  
    \item The remaining uncolored nodes are either part of the single uncolored branch or lie between intersection nodes that are already colored. Therefore, due to the tree structure, the nodes between the intersection nodes can also be colored following the color rule.
    \item Then, the remaining uncolored branch is colored.
\end{enumerate}
Thus, both $\graphAC$ and $\graphACbar$ are colorable. As both $\graphAC$ and $\graphACbar$ are colorable, $(\graphA, \graphC)$ is \emph{strongly structurally observable} by Definition \ref{Def: Observability_by_Colorability}, and thus, system \eqref{eq: LTI} is \emph{observable} by Definition \ref{Def: Structural Obserability}, completing the proof.
\end{proof}

To illustrate the coloring steps in the proof of Lemma \ref{Lemma: Tree}, we introduce the following example.
\begin{example}\label{Ex: Tree}
    Consider the tree graph $\graphAC$ in Fig. \ref{Fig:Lemma_Example}(a). The graph $\graphG(\graphA^\top)$ is a tree with its nodes represented by gray circles in Fig \ref{Fig:Lemma_Example}, while the hexagon nodes represent the nodes from $\graphC^\top$ in $
    \graphG(\graphA^\top,\graphC^\top)$. The extreme nodes are $\graphV_e = {1, 3, 7}$ and the sensor nodes are $10$ and $11$. The sensor matrix $C$ is constructed according to Assumption \ref{Ass: C} and \eqref{eq: C_Lemma} in Lemma \ref{Lemma: Tree}. Following the steps in the proof of Lemma \ref{Lemma: Tree}, nodes 1 and 7 are colored by the sensor nodes (Fig. \ref{Fig:Lemma_Example}(a)). As extreme nodes, they color their connected branches and the intersection node $5$ (Fig. \ref{Fig:Lemma_Example}(b-e)). As the graph is a tree, all uncolored nodes between colored intersection nodes are colored, and finally, the remaining uncolored branch is also colored (Fig. \ref{Fig:Lemma_Example}(f)). Thus, $\graphAC$ is colored.
\end{example}
 \begin{figure}[ht]
    \centering
    \includegraphics[width=0.8\linewidth]{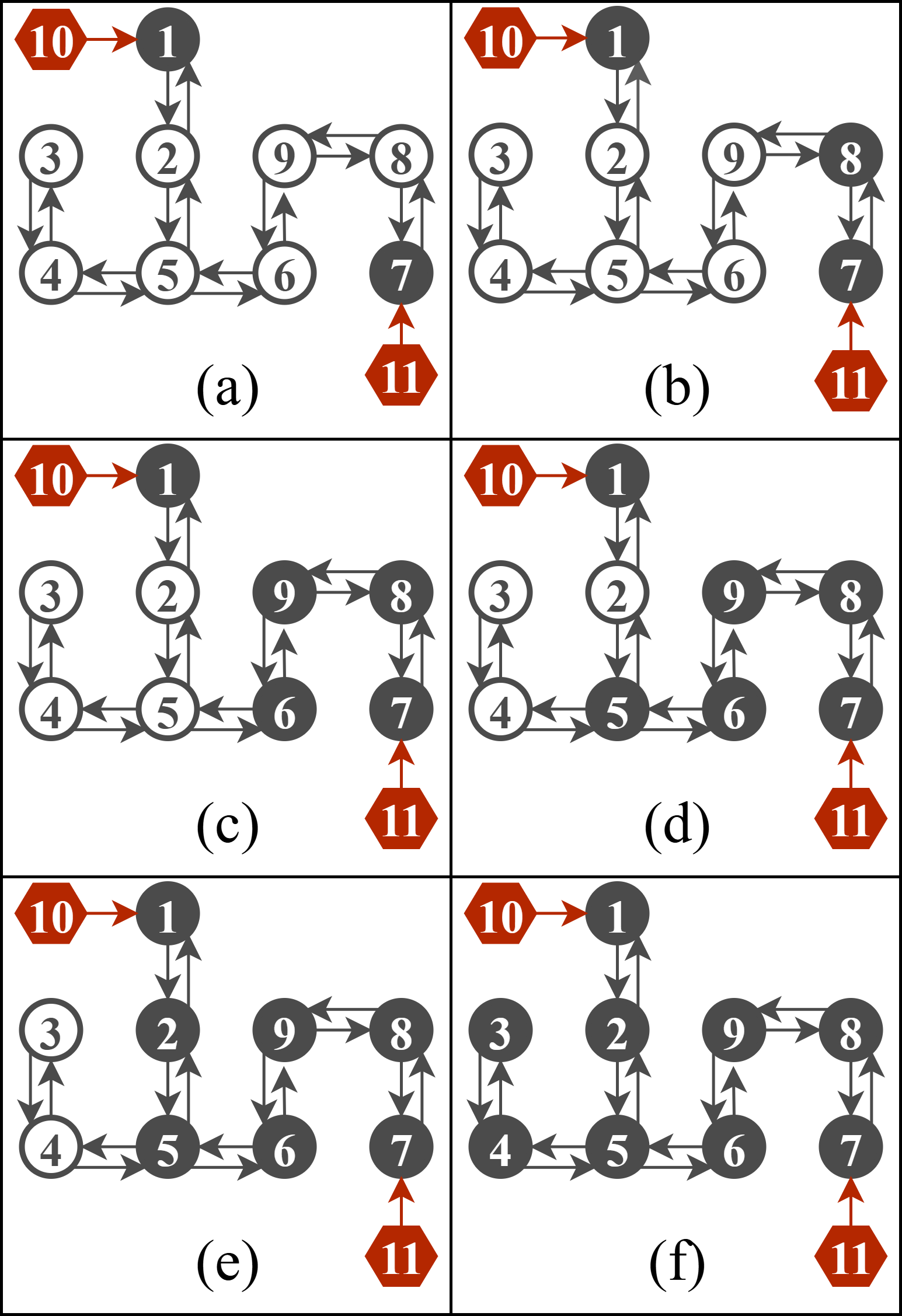}\vspace{0.1cm}
    \caption{Subfigures (a)-(f) illustrate the sequential application of the color change rule on the graph $\graphG(\graphA^\top,\graphC^\top)$. The round nodes represent the nodes associated with $\graphA^\top\!$, while the red hexagon nodes indicate the sensor nodes corresponding to $\graphC^\top\!$.}
    \label{Fig:Lemma_Example}
\end{figure}
\begin{remark}\label{Remark: self-loops}
The coloring process for $\graphACbar$ remains unaffected, as self-loops do not influence the steps involved. Namely, in Example \ref{Ex: Tree}, the nodes $1, 2, \dots, 9$ in $\graphACbar$ have self-loops. In the first step of the proof of Lemma \ref{Lemma: Tree}, the extreme nodes are colored by the sensor nodes. If an extreme node is not colored, it would have two white out-neighbors in $\graphE_\ast$, one of which is the self-loop. However, since the coloring starts from the sensor nodes, the process ensures that $\graphACbar$ remains colorable by following the same steps.
\end{remark}

\subsection{Sensor placement in cyclic graphs using spanning trees and colorability}\label{Subsec: Sensor placement in cyclic graphs using spanning trees and colorability}

We now introduce a scalable sensor placement algorithm for cyclic bidirected or undirected graphs. As shown in Fig. \ref{fig:Flowchart}, the algorithm follows four steps: i) start with a cyclic graph, ii) generate an associated spanning tree, iii) determine a sensor placement strategy based on the spanning tree, and iv) establish colorability of the cyclic graph under the determined sensor configuration.

\begin{figure}[b]
    \centering
    \includegraphics[width=0.95\linewidth]{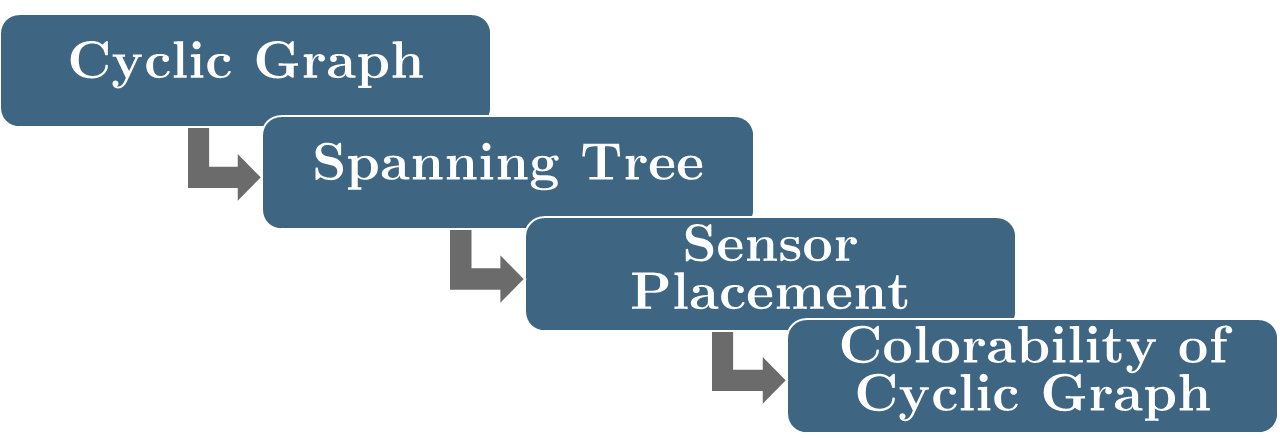}
    \caption{Building blocks for developing a scalable sensor placement algorithm.}
    \label{fig:Flowchart}
\end{figure}

First, we introduce Algorithm \ref{Alg:convert_to_tree}, which extracts a spanning tree from a bidirected or undirected cyclic graph $\graphG(\graphA)$. The input is the adjacency matrix $\graphA \in \{0, \ast, ?\}^{n_x \times n_x}$, where the symbol $\ast$ in Algorithm \ref{Alg:convert_to_tree} is implemented using a symbolic variable \texttt{star}. The output is the tree adjacency matrix $\graphA_{\text{tree}}$. 
The algorithm initializes the visited node vector $\textit{v}$, parent tracking vector $\textit{p}$, and tree adjacency matrix $\graphA_{\text{tree}}$ (lines 3–4). A depth-first search (DFS) explores $\graphG(\graphA)$, adding edges to $\graphA_{\text{tree}}$ only when encountering an unvisited node (lines 10-11), thereby ensuring $\graphG(\graphAtree)$ is a tree. Parent nodes are recorded (line 11), and DFS recurses until all nodes are processed. The resulting $\graphA_{\text{tree}}$ has a cycle-free spanning structure, specifically eliminating edges between intersection nodes and one of their neighboring nodes in a cycle.
\begin{algorithm}[t]
    \caption{Cyclic Graph to Spanning Tree using DFS}
    \begin{algorithmic}[1]\label{Alg:convert_to_tree}
    \STATE \textbf{Input:} Adjacency matrix $\graphA$ of size $n_x\times n_x$
    \STATE \textbf{Output:} Tree adjacency matrix $\graphA_{\text{tree}}$ of size $n_x\times n_x$
    \STATE $\textit{v} = \textit{false}(1,n_x)$, 
    \STATE $\textit{p} = -1 \times \textit{ones}(1,n_x)$, $\graphA_{\text{tree}} = \textit{zeros}(n_x,n_x)$ 
    \FOR{$i = 1,\dots,n_x$}
        \IF{$\textit{v}(i) = \textit{false}$} 
            \STATE $\textit{v}(i) = \textit{true}$
            \FOR{$j = 1,\dots,n_x$} 
                \IF{$\graphA (i,j) = \ast$ \textit{and} $\textit{v}(j) = \textit{false}$} 
                    \STATE $\graphA_{\text{tree}}(i,j) = \ast$, $\graphA_{\text{tree}}(j,i) = \ast$ 
                    \STATE $\textit{p}(j) = i$, $\textit{v}(j) = \textit{true}$
                    \FOR{$k = 1,\dots,n_x$} 
                        \IF{$\graphA (j,k) = \ast$ \textit{and} $\textit{v}(k) = \textit{false}$} 
                            \STATE $\graphA_{\text{tree}}(j,k) = \ast$, $\graphA_{\text{tree}}(k,j) = \ast$ 
                            \STATE $\textit{p}(k) = j$
                        \ENDIF
                    \ENDFOR
                \ENDIF
            \ENDFOR
        \ENDIF
    \ENDFOR
    \end{algorithmic}
\end{algorithm}

To evaluate the computational complexity of Algorithm \ref{Alg:convert_to_tree}, we use O-notation to describe its upper bound \cite[Chapter 3]{cormen2022introduction}. The complexity is determined by the for loops in the algorithm: The first loop iterates over all nodes, contributing $\text{O}(n_x)$. The second and third loops iterate over subsets of nodes based on the graph structure, contributing $\text{O}(n_x - c_1)$ and $\text{O}(n_x - c_2)$, respectively, where $c_1$ and $c_2$ are integers bounded between 0 and $n_x$. Therefore, the total complexity is $\text{O}(n_x(n_x - c_1)(n_x - c_2))$. In the worst case, when the graph is fully connected, the complexity becomes $\text{O}(n_x^3)$, while in the best case, it reduces to $\text{O}(n_x)$.

After obtaining the spanning tree using Algorithm \ref{Alg:convert_to_tree}, we proceed to place sensor nodes based on the spanning tree $\graphAtree$, see Algorithm \ref{Alg:sensor_placement}. Algorithm \ref{Alg:sensor_placement} initializes the sensor placement process by first removing self-loops from the tree adjacency matrix (line 3). It then identifies the extreme nodes of $\graphAtree$ (line 4), and places sensor nodes connected to these nodes by updating the sensor placement matrix $C$ (lines 6-8). The computational complexity of Algorithm \ref{Alg:sensor_placement} is $\text{O}(n_x)$, determined by the for loop in lines 6–8.

\begin{algorithm}[t]
    \caption{Sensor Placement for Cyclic Graphs Based Spanning Tree}
    \begin{algorithmic}[1]\label{Alg:sensor_placement}
    \STATE \textbf{Input:} Adjacency matrix $\graphA$ of size $n_x \times n_x$, Tree adjacency matrix $\graphA_{\text{tree}}$ of size $n_x \times n_x$
    \STATE \textbf{Output:} Sensor placement matrix $C$ of size $n_e \times n_x$, 
    \STATE $\graphA_{\text{tree}}(i,i) = 0$
    \STATE $\graphV_e=find(\sum(\graphA_{\text{tree}}(i,:) < 2)$
    \STATE $n_e=length(\graphV_e)$, $C = \text{zeros}(n_e, n_x)$
    \FOR{$i = 1 : n_e$}
            \STATE $C(i,\graphV_e(i)) = 1$ 
    \ENDFOR
    \end{algorithmic}
\end{algorithm}

Based on the above Algorithms \ref{Alg:convert_to_tree} and \ref{Alg:sensor_placement}, the main result is stated next.

\begin{theorem}\label{Theorem:Colorability_cyclic}
    Let Assumptions \ref{Ass: GraphA} and \ref{Ass: C} hold. Let $\graphAtree$ be the spanning tree of $\graphA$ generated by Algorithm \ref{Alg:convert_to_tree}, and let $C$ be the sensor matrix constructed by Algorithm \ref{Alg:sensor_placement}. Then, the linear system \eqref{eq: LTI} is observable.
\end{theorem}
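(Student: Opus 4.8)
The plan is to verify the colorability criterion of Definition~\ref{Def: Observability_by_Colorability}: it suffices to show that both $\graphAC$ and $\graphACbar$ are colorable under the matrix $C$ returned by Algorithm~\ref{Alg:sensor_placement}. The guiding observation is that Algorithm~\ref{Alg:sensor_placement} places a sensor on \emph{every} extreme node of the spanning tree $\graphAtree$ (all $n_e$ of them), i.e.\ one more than the $n_e-1$ sensors used in Lemma~\ref{Lemma: Tree}. I expect this single additional sensor to be exactly what compensates for the edges that Algorithm~\ref{Alg:convert_to_tree} deletes when breaking the cycles, so the whole argument hinges on comparing the coloring of the cyclic graph $\graphG(\graphA^\top)$ with that of its spanning subtree $\graphG(\graphAtree)$. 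Since $\graphA=\graphA^\top$ by Assumption~\ref{Ass: GraphA}(i), I may work directly with $\graphG(\graphA)$ and $\graphG(\graphAtree)$.

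First I would color the spanning tree exactly as in the proof of Lemma~\ref{Lemma: Tree}: with a sensor on every tree leaf, the branches and the inter-intersection paths are colored from the extreme nodes inward, and now no branch is left uncolored. The real content is then to re-insert the edges that distinguish $\graphA$ from $\graphAtree$ --- by the construction in Algorithm~\ref{Alg:convert_to_tree} these are precisely the cycle-closing (back) edges, each joining an intersection node to one of its cycle neighbors --- and to show that they do not spoil colorability. The danger is purely local: re-adding an edge can only increase a node's out-degree in $\graphE_\ast$, so a node that fired in the tree (having a unique white out-neighbor there) might acquire a second white out-neighbor in $\graphG(\graphA)$ and thereby fail the color-change test.

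The key step is to choose the firing order so that this never happens. I would first note that the color-change closure is independent of the order in which admissible nodes fire (coloring a node black can only remove, never create, white out-neighbors, so an available force persists until it is consumed), hence it suffices to exhibit \emph{one} successful order. Coloring from the leaves of $\graphAtree$ inward, I would argue that at the moment a node $i$ must color its remaining tree-neighbor, both endpoints of every cut edge incident to $i$ are already black. Intuitively, breaking a cycle turns the DFS-deepest endpoint into a leaf of $\graphAtree$, which Algorithm~\ref{Alg:sensor_placement} then sensors, so each cycle is attacked simultaneously from the two directions opened by the cut and no intersection node ever sees two white cycle-neighbors at once. Small worked cases --- a triangle or a $4$-cycle carrying branches --- confirm that it is exactly the extra sensor on the cut-induced leaf that prevents an intersection node from stalling with two white out-neighbors.

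The hard part will be turning this timing intuition into a rigorous ordering proof for arbitrarily nested cycles, where a cut endpoint need not itself become a leaf (both endpoints may remain high-degree and be reached only through their branches). I expect to handle this by induction on the inner cycles of Definition~\ref{Def: Cycle_InnerCycle_Cyclic}: peel off one broken inner cycle at a time, use the symmetry $\graphA=\graphA^\top$ and the existence of at least one extreme node (Assumption~\ref{Ass: GraphA}(iii)) to certify that the two cut-opened directions are each reachable from a sensored leaf, and apply the inductive hypothesis to the smaller graph obtained once the colored cycle is absorbed. Finally, I would appeal to Remark~\ref{Remark: self-loops} to transfer the identical firing order to $\graphACbar$: the added diagonal self-loops are never the unique white out-neighbor once coloring is launched from the sensors, so $\graphACbar$ is colorable as well. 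Colorability of both $\graphAC$ and $\graphACbar$ yields strong structural observability via Definition~\ref{Def: Observability_by_Colorability}, and hence observability of \eqref{eq: LTI} by Definition~\ref{Def: Structural Obserability}.
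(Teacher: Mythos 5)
Your proposal follows essentially the same route as the paper: sensor every extreme node of the spanning tree produced by Algorithm~\ref{Alg:convert_to_tree}, launch the color-change rule from those sensored leaves as in Lemma~\ref{Lemma: Tree}, argue that the cycle-closing edges removed by the DFS cannot stall the process because each broken cycle is attacked from both sides, and dispatch $\graphACbar$ via Remark~\ref{Remark: self-loops}. The one step you honestly flag as unfinished --- a rigorous argument that no intersection node is ever left with two white out-neighbors once the deleted edges are reinstated, especially for nested or overlapping cycles where a cut endpoint does not become a leaf --- is exactly the step the paper's own proof covers only by assertion in its items 3)--4), so you are not missing anything the paper actually supplies; your explicit confluence observation (that the forcing closure is order-independent) is in fact a useful addition the paper omits.
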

\begin{proof} 
    Let $\graphC$ be a pattern matrix such that $C\in\graphP(\graphC)$. If Assumption \ref{Ass: C} holds and $C$ is constructed by Algorithm \ref{Alg:sensor_placement}, with $\graphAtree$ generated by Algorithm \ref{Alg:convert_to_tree}, then a sensor node is connected to at least one extreme node and nodes neighboring an intersection node of $\graphA^\top$ and $\graphAbar^\top$ in $\graphAC$ and $\graphACbar$, respectively.
    By following the coloring rule and considering only edges in $\graphE_\ast$ for the coloring process, we can conclude the following on $\graphAC$ and $\graphACbar$:
    \begin{enumerate}
        \item All nodes connected to sensor nodes defined by $C$ in $\graphAC$ and $\graphACbar$ are colored.
        \item Following Definition \ref{Def: Intersection_Extreme_Node} and Definition \ref{Def: Branch}, the colored extreme nodes color the associated branches.
        \item From 1)-2) and due to the construction of $C$, there is \emph{at least} one colored node $i$ with exactly one white out-neighbour $j$ with $(i,j)\in\graphE_\ast$. Thus, following the color rule, this node can color its neighbors until it reaches an intersection node, at which point it can color the intersection node.
        \item Due to the sensor node placement in cycles based on the construction of $C$, step 4 repeats until both $\graphAC$ and $\graphACbar$ are fully colored.
    \end{enumerate}
    Then, as $\graphAC$ and $\graphACbar$ are colorable, $(\graphA,\graphC)$ is \emph{strongly structurally observable} by Definition \ref{Def: Observability_by_Colorability}, and thus, system \eqref{eq: LTI} is \emph{observable} by Definition \ref{Def: Structural Obserability}, completing the proof. 
\end{proof}

To illustrate the coloring steps in the proof of Theorem \ref{Theorem:Colorability_cyclic}, we introduce the following example.
\begin{example}
    Consider the cyclic graph $\graphAC$ shown in Fig. \ref{Fig:Theorem_Example}(a), with its spanning tree constructed by Algorithm \ref{Alg:convert_to_tree} shown in Fig. \ref{Fig:Lemma_Example}(a). The nodes corresponding to the pattern matrix $\graphA$ are shown in grey and the nodes corresponding to the pattern matrix $\graphC$ are depicted in red. By Lemma \ref{Lemma: Tree}, the spanning tree is colorable if sensor nodes are connected to all extreme nodes except one. However, for the cyclic graph, the sensor matrix $C$ must ensure that a sensor node is connected to each extreme node of the spanning tree. Thus, sensor nodes $10$, $11$ and $12$ connected to nodes $1$, $3$ and $7$, respectively. 
    Fig. \ref{Fig:Theorem_Example}(a) shows the sensor nodes connected to at least one extreme node (node $1$) and nodes neighboring an intersection node (nodes $3,7)$, consistent with the proof of Theorem \ref{Theorem:Colorability_cyclic}. 
    
    The coloring process starts by coloring the nodes connected to sensor nodes, as shown in Figure \ref{Fig:Theorem_Example}(a). Next, the branch connected to colored extreme nodes is colored in Figure \ref{Fig:Theorem_Example}(b). In Figure \ref{Fig:Theorem_Example}(c), step 3 of the proof is applied to node $3$, which then colors nodes $4$ and $5$. This step repeats in Figures \ref{Fig:Theorem_Example}(d-f) until $\graphAC$ is fully colored. As stated in Remark \ref{Remark: self-loops}, the same steps apply to $\graphACbar$, ensuring it is also colorable. Thus, $(\graphA, \graphC)$ is strongly structurally observable.
\end{example}
\begin{figure}[t]
    \centering
    \includegraphics[width=0.9\linewidth]{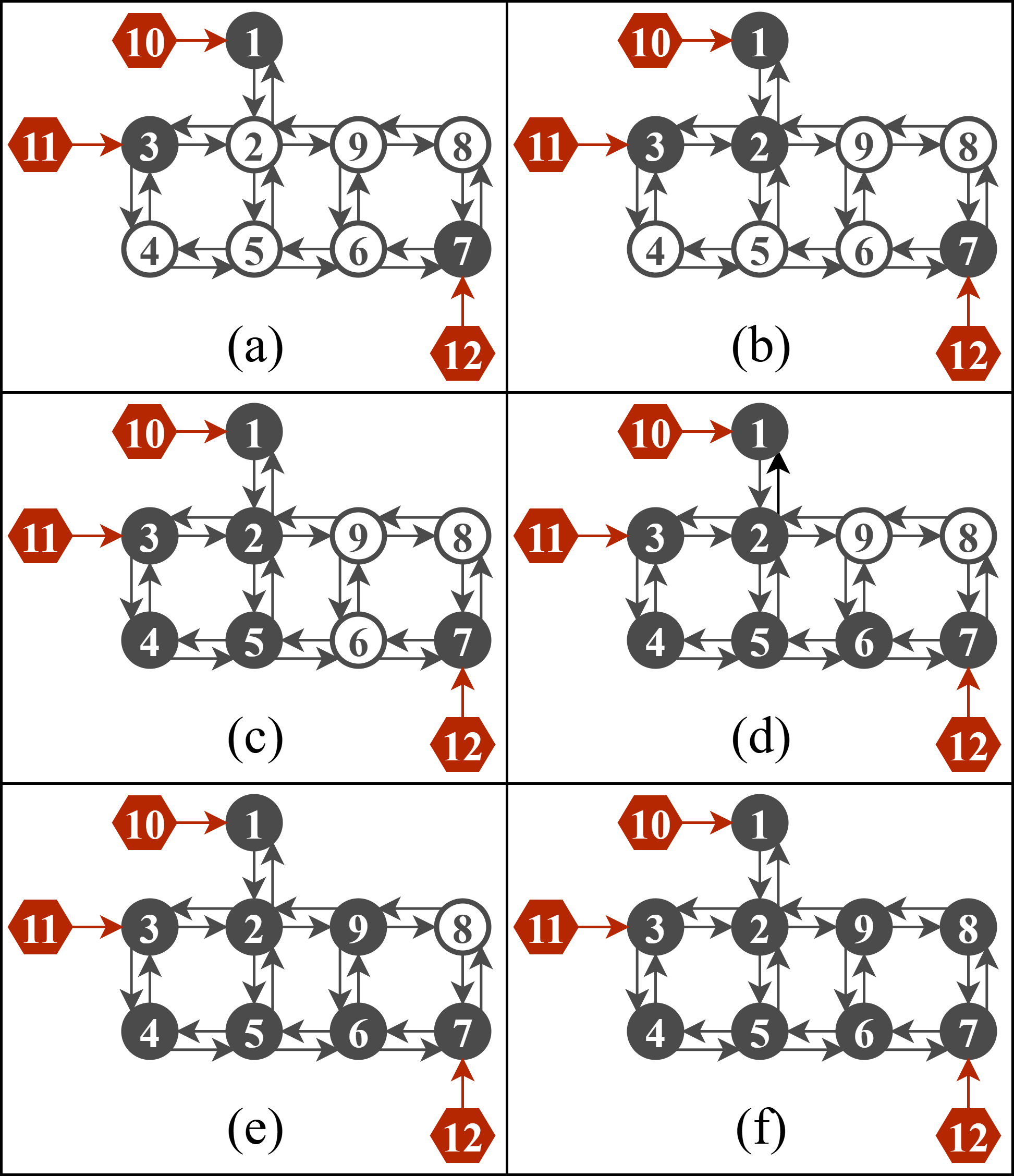}\vspace{0.1cm}
    \caption{Subfigures (a)-(f) illustrate the sequential application of the color change rule for cyclic graphs. The round nodes represent the nodes associated with $\graphA^\top\!$, while the red hexagon nodes indicate the sensor nodes corresponding to $\graphC^\top\!$.} 
    \label{Fig:Theorem_Example}
\end{figure}

\begin{algorithm}
    \caption{Incidence Matrix Generation}
    \begin{algorithmic}[1]\label{Alg:incidence_matrix}
    \STATE \textbf{Input:} EPANET object \texttt{G}.
    \STATE \textbf{Output:} Incidence matrix $\Ainc$ of size $n_j \times n_p$.
    \STATE \texttt{linkTable}$= G.getLinkNodesIndex $
    \STATE $n_l=size(\texttt{linkTable},1)$, $n_p=G.NodeCount$ 
    \STATE $\Ainc=zeros(n_j,n_p)$
    \FOR{$i = 1 : n_p$}
        \STATE $v_{out} = \texttt{linkTable}(i, 1)$  
        \STATE $v_{in} = \texttt{linkTable}(i, 2)$  
        \STATE $\Ainc(v_{out},i)= 1 $, $\Ainc(v_{in},i)= -1 $ 
    \ENDFOR
    \end{algorithmic}
\end{algorithm}

\section{Implementation on EPANET benchmarks}\label{Sec: Implementation on EPANET benchmarks}

To demonstrate the effectiveness of the proposed sensor placement algorithm, we apply it to the Hanoi, AnyTown, Net3, D-town and L-town networks from the EPANET MATLAB toolbox, representing a range of sizes and configurations, as summarized in Table \ref{tab: Results}. 

For the sensor placement Algorithms \ref{Alg:convert_to_tree} and \ref{Alg:sensor_placement}, a structured representation of a Water Distribution Network (WDN) model is required. Therefore, for all examples, we utilize the linear elastic water column (EWC) model, as described in \cite{zeng2022elastic}, with the associated structured system outlined in \cite[Section IV.A]{vanGemert2024exploiting} to capture the dynamics of the WDNs.
To implement this in MATLAB, particularly for EPANET networks, we define a function to construct the incidence matrix $\Ainc$, which is needed for the structured WDN model. Algorithm \ref{Alg:incidence_matrix} utilizes the EPANET network link data (lines 3-4) to generate the incidence matrix $\Ainc$ (lines 6-10). Using $\Ainc$, we define the structured system dynamics as detailed in \cite[Section IV.A]{vanGemert2024exploiting}, and implement the system in MATLAB using symbolic variables \texttt{star} and \texttt{Question}. The computational complexity of Algorithm \ref{Alg:incidence_matrix} is $\text{O}(n_x)$, determined by the for loop in lines 6–10.

The results of the implemented spanning tree Algorithm \ref{Alg:convert_to_tree} combined with the sensor placement Algorithm are shown in Table \ref{tab: Results}\footnote{All experiments were conducted on a Lenovo ThinkPad with an Intel Core i7-165H vPro, 64GB RAM, running MATLAB R2023a.}. 
As can be seen in Table \ref{tab: Results}, the proposed algorithms compute a sensor placement configuration in under 0.1 seconds, even for the largest network, demonstrating its scalability.
The number of placed sensors depends on the number of extreme nodes, cycles, and network topology. In most cases, e.g., Hanoi, Net3, and D-town, sensors are placed at extreme nodes, with one additional sensor per cycle, resulting in a total count equal to the sum of extreme nodes and cycles. This aligns with the structure of the sensor placement algorithm, which disconnects cycles and places sensors at the resulting extreme nodes, thereby ensuring at least one sensor per cycle. However, in more interconnected networks with overlapping cycles and fewer extreme nodes, such as AnyTown and L-Town, additional sensors may be needed within cycles.

\begin{table}[ht]
\footnotesize
\centering
\begin{tabular}{ |p{1cm}|p{0.6cm}|p{0.7cm}|p{0.85cm}|p{0.8cm}|p{1.6cm}| }
 \hline
 \mbox{Name} & \mbox{Nodes} & \mbox{Cycles} & \mbox{Extreme} Nodes & \mbox{Sensors} & \mbox{Computation} \mbox{Time [s] of} \mbox{Alg. \ref{Alg:convert_to_tree}} {and \ref{Alg:sensor_placement}} \\
 \hline 
 Hanoi   & 66  & 3   & 3   & 6   & 0.0287 \\
 AnyTown & 71  & 19  & 2   & 24  & 0.0302 \\
 Net3    & 216 & 23  & 16  & 39  & 0.0325 \\
 D-town  & 866 & 53  & 78  & 131 & 0.0604 \\
L-town  & 1694 & 124 & 37  & 162 & 0.0993 \\
 \hline
\end{tabular}
\caption{Computation times (in seconds) for the scalable algorithms presented in this paper for various networks.}
\label{tab: Results}
\end{table}
In Figures \ref{Fig: Hanoi_network} and \ref{Fig: D-town_Network}, the Hanoi and D-town networks are visualized, while Figures \ref{Fig: Hanoi} and \ref{Fig: D-town} show the corresponding sensor placements using Algorithms \ref{Alg:convert_to_tree} and \ref{Alg:sensor_placement}. For the Hanoi network, sensor nodes are placed at extreme nodes and near intersection nodes, consistent with the proof of Theorem \ref{Theorem:Colorability_cyclic}. 
The D-town network illustrates the complexity of large-scale cyclic structures and emphasizes the need for scalable sensor placement strategies.

While the proposed method prioritizes scalability and performs well on large-scale benchmarks, it does not consider sensor costs or preferred locations, and the resulting placement is not guaranteed to be unique. In contrast, the approach in \cite{vanGemert2024exploiting} incorporates these factors but does not scale, requiring up to 12 hours even for small networks like Hanoi. Consequently, the sensor configurations obtained here may differ from those in \cite{vanGemert2024exploiting}, particularly when cost plays a significant role.

\begin{figure}[t]
    \centering
    \includegraphics[width=0.5\linewidth]{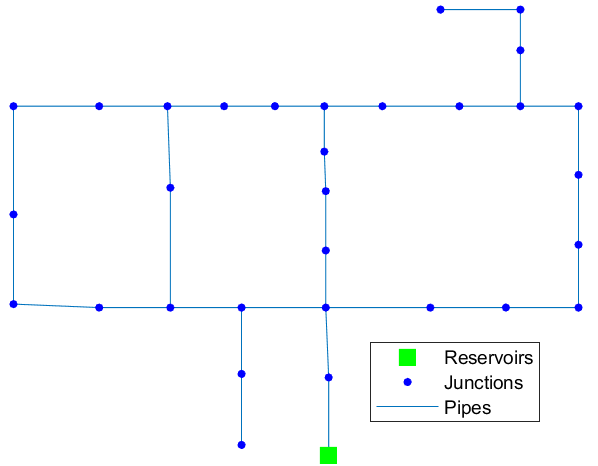}
    \caption{The Hanoi network from EPANET.}
    \label{Fig: Hanoi_network}
\end{figure}
\begin{figure}[b]
    \centering
    \includegraphics[width=0.8\linewidth]{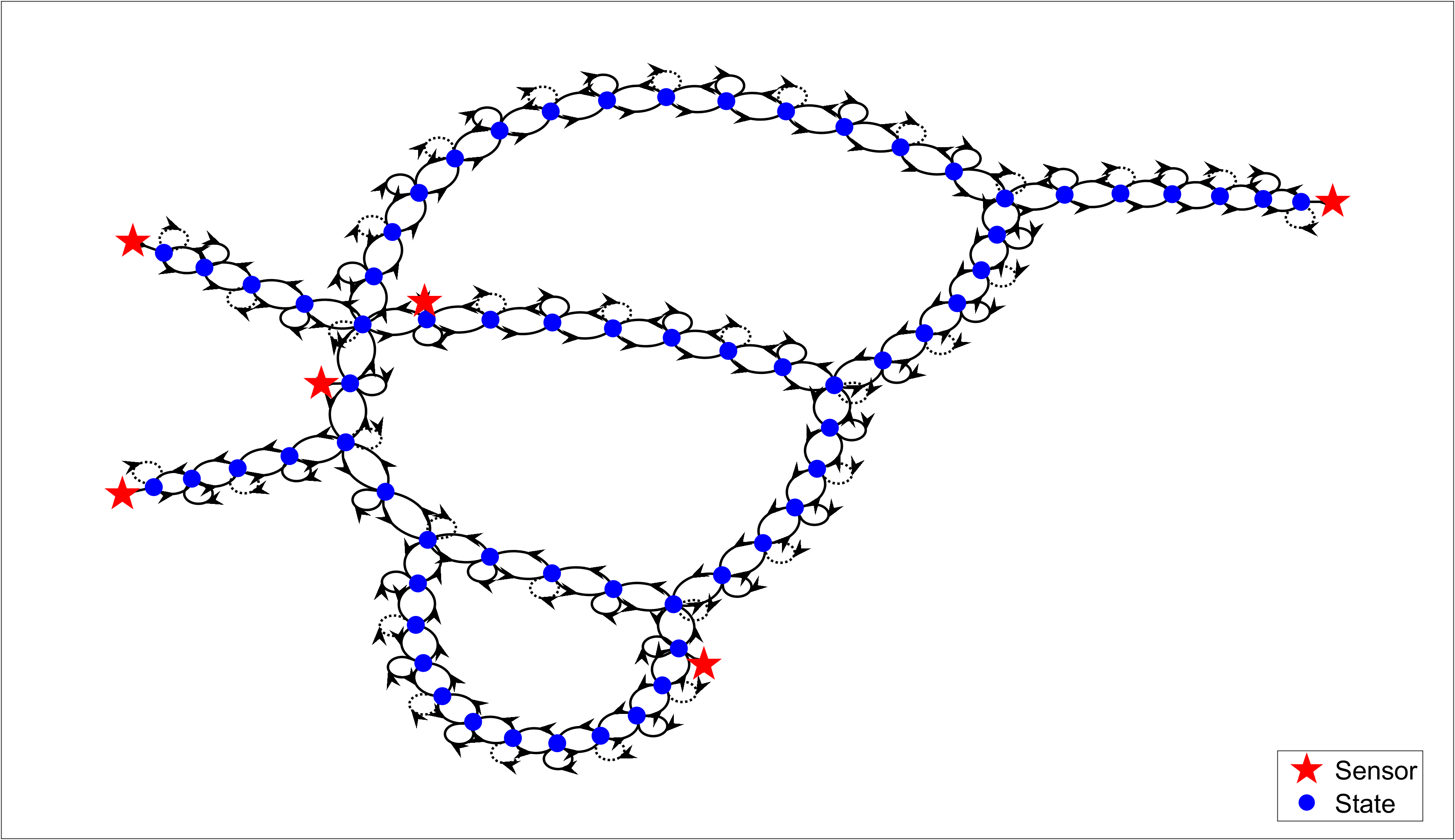}
    \caption{Hanoi network with sensor placement configuration constructed using Algorithm \ref{Alg:sensor_placement}, ensuring observability.}
    \label{Fig: Hanoi}
\end{figure}
\begin{figure}[ht]
    \centering
    \includegraphics[width=0.7\linewidth]{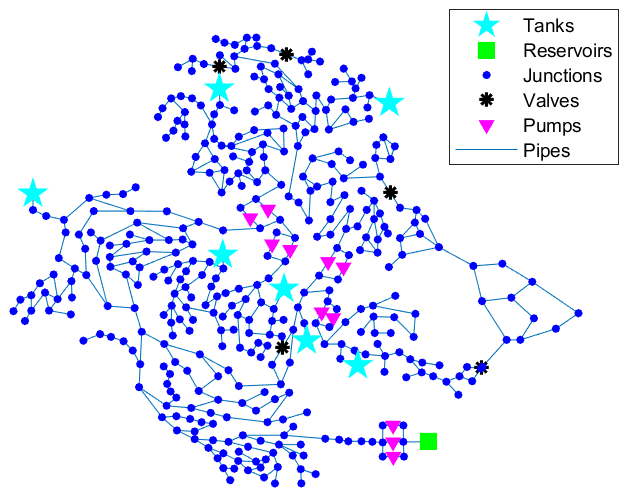}
    \caption{The D-town network from EPANET.}
    \label{Fig: D-town_Network}
\end{figure}
\begin{figure}[ht]
    \centering
    \includegraphics[width=0.99\linewidth]{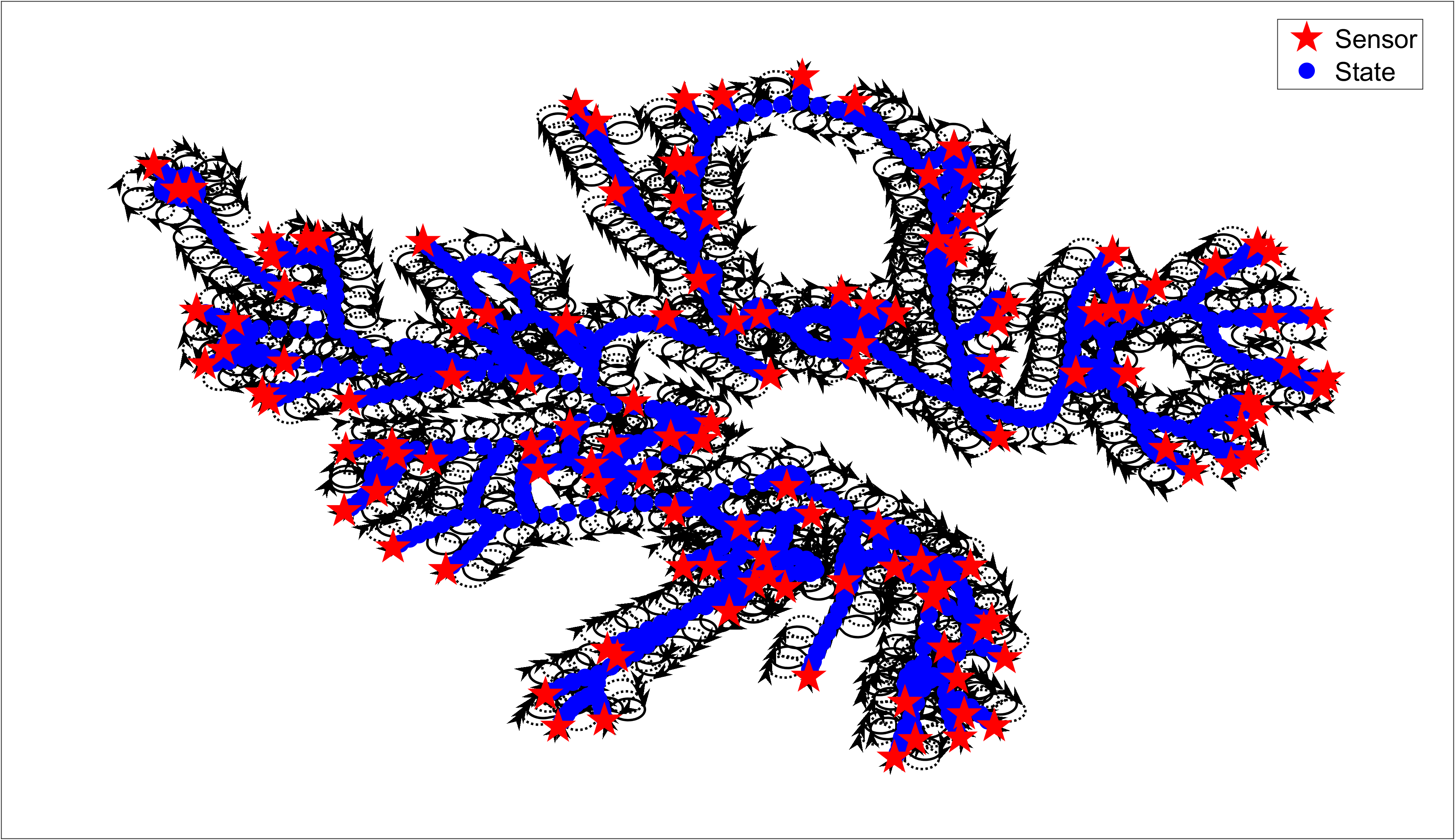}
    \caption{D-town network with sensor placement configuration constructed using Algorithm \ref{Alg:sensor_placement}, ensuring observability}
    \label{Fig: D-town}
\end{figure}

\section{Conclusions}\label{Sec: Conclusions}
In this paper, we addressed the challenges of optimal sensor placement for state estimation and monitoring in large-scale cyclic and acyclic networks under parametric uncertainties. We proposed a method for sensor placement for acyclic networks and an algorithm for cyclic networks, both ensuring structural observability in the presence of parametric uncertainties. For cyclic networks, a graph-based strategy was employed to efficiently manage the computational complexities associated with large-scale systems. The effectiveness of the proposed approach was demonstrated through its application to several large-scale water distribution networks to highlight its scalability and practical applicability in cyclic network scenarios.

Future work may involve extending the proposed method to accommodate directed networks, which are common in practical applications. To further improve applicability, the sensor placement strategy should be made less conservative and adapted to incorporate sensor costs and preferred installation locations. Additionally, the spanning-tree algorithm could be optimized to achieve linear computational complexity. Finally, a comparative study with existing sensor placement methods for leakage detection in WDNs would offer valuable insights into the effectiveness of observability-based methods in real-world scenarios.
\bibliographystyle{IEEEtran}
\bibliography{main}
\end{document}